\documentclass{article}
\usepackage[margin=1in]{geometry}  
\usepackage{amsmath, amssymb, amsthm}
\usepackage{caption}
\usepackage{graphicx}
\usepackage{hyperref}

\newtheorem* {lemma}{Lemma}
\newtheorem* {theorem}{Theorem}

\theoremstyle{plain}
\newtheorem*{corollary}{Corollary}

\title{An Alternative Proof of Dilworth's Theorem via Local Chain Merges}
\author{Tao Zhang \\
\href{mailto:14126186@bjtu.edu.cn}{14126186@bjtu.edu.cn}}
\date{}  
\begin{document}
\maketitle

\begin{abstract}
Several elegant proofs of Dilworth's theorem for finite posets already exist in the literature. In this note, we present an alternative inductive proof using a local ``legal merge'' lemma on chain ends. By combining this lemma with the bound established by Dilworth's theorem, we observe a direct consequence: every chain decomposition can be transformed into a minimal chain decomposition through a finite sequence of such legal merges.
\end{abstract}

\subsection*{Coloring and Legal Merges}
Let $S$ be a finite poset with an initial $k$-chain decomposition $\mathcal{C}=\{C_1, \dots, C_k\}$. We define a \textit{coloring scheme} on $S$ by assigning color $i$ to every element $x \in C_i$. Thus, elements share a color if and only if they belong to the same chain in $\mathcal{C}$. 

For an arbitrary chain decomposition $\mathcal{C}'$ of $S$, a \textit{legal merge} is an operation defined as follows: if two chains in $\mathcal{C}'$ have minimal elements of the same color, we merge their monochromatic bottom segments. Since all elements of a given color belong to the same chain in $\mathcal{C}$ and are thus pairwise comparable, the union of these segments forms a linear order. Specifically, the segment whose own maximal element is smaller is merged into the chain containing the segment with the larger maximal element, ensuring the merged set remains a valid chain.

\begin{lemma}
For any chain decomposition $\mathcal{C}'$ of $S$, there exists a finite sequence of legal merges starting from $\mathcal{C}'$ and resulting in a decomposition $\mathcal{C}''$ where the minimal elements of the chains have pairwise distinct colors.
\end{lemma}

\begin{proof}
Let $\Phi$ be the total number of monochromatic segments in $\mathcal{C}'$. Since $S$ is finite, $\Phi$ must be finite. Each legal merge reduces $\Phi$ by exactly $1$. Since $\Phi \ge k$, $\Phi$ cannot decrease indefinitely. Therefore, the process must terminate.
\end{proof}

\begin{figure}[htbp]
    \centering
    \includegraphics[width=0.7\textwidth]{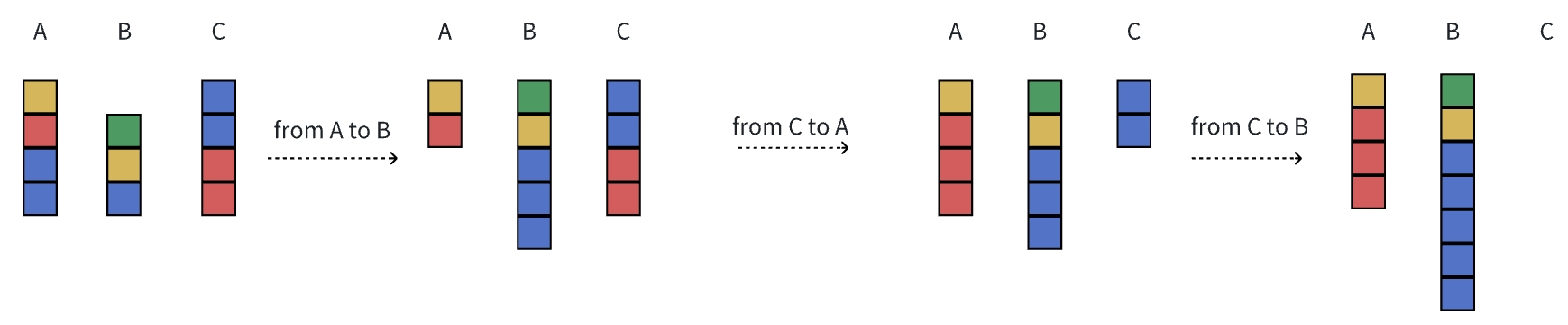}
    \vspace{5pt}
    \captionsetup{font=footnotesize, labelfont=bf} 
    \caption{An example of the legal merge.}
\end{figure}

\begin{theorem}[Dilworth]
The minimum number of chains needed to cover a finite poset $S$ is equal to the maximum size of an antichain in $S$.
\end{theorem}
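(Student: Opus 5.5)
The inequality ``minimum number of chains $\ge$ maximum antichain size'' is immediate: the elements of an antichain are pairwise incomparable, so no two of them can lie in the same chain. The real content is the reverse inequality. Writing $w$ for the maximum size of an antichain in $S$, I would show that $S$ can be covered by $w$ chains, by induction on $|S|$. The empty poset is trivial.

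For the inductive step I would pick a minimal element $x$ of $S$ and put $S' = S\setminus\{x\}$. The width $w'$ of $S'$ satisfies $w'\le w$. By induction $S'$ has a decomposition $\mathcal{D}=\{D_1,\dots,D_{w'}\}$, and I use $\mathcal{D}$ as the initial decomposition that defines the coloring, so $k=w'$. If $w'<w$, then $\mathcal{D}\cup\{\{x\}\}$ already uses at most $w$ chains and we are done. So the case to handle is $w'=w$.

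In that case I look at the chains $D_i$ whose elements are all incomparable to $x$, together with those chains of which $x$ could be placed at the bottom. If some $D_i$ has every element above $x$, then $D_i\cup\{x\}$ is a chain, and we again have $w$ chains. Otherwise each $D_i$ contains some element not above $x$. Because $x$ is minimal in $S$, each such element is incomparable to $x$.

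The idea is then to perform a sequence of re-routings: split chains at the points where they meet elements above $x$, and recombine the bottom segments using legal merges. The Lemma guarantees that this process terminates in a decomposition $\mathcal{C}''$ of $S'$ whose chain minima have pairwise distinct colors. Once the minima have distinct colors, no further legal merge is possible. I would then argue that the minima of the chains not passing above $x$, together with $x$ itself, are pairwise incomparable.

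\textbf{Main obstacle.} The hard part is this last incomparability claim. It requires showing two things:
\begin{itemize}
\item[(i)] if two chain minima $a<b$ remained comparable, one could still carry out a legal merge, or reroute the chain starting at $b$ through $a$, and thereby lower the potential $\Phi$ used in the Lemma;
\item[(ii)] the re-routings never push the number of chains above $w$.
\end{itemize}
My first attempt would be to choose the bottom segments so that each chain's minimum is as high as possible in its color class. Then comparability $a<b$ of distinct-colored minima would let us move $a$'s monochromatic bottom segment under $b$, which contradicts the termination state of the Lemma.

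If this works, the minima plus $x$ form an antichain of size $w+1$ in $S$. That contradicts the choice of $w$, so the case $w'=w$ must in fact end with $x$ absorbed into an existing chain. This completes the induction.
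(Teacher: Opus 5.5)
There is a genuine gap at the core step. Your proof of the easy inequality is fine, and so is your handling of the cases $w'<w$ and ``some $D_i$ lies entirely above $x$''. The mechanism you propose for the remaining case cannot work.

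The terminal condition of the Lemma (chain minima of pairwise distinct colors) says nothing about incomparability. It already holds for $\mathcal{D}$ itself, because the coloring is defined by $\mathcal{D}$ and each $D_i$ is monochromatic. If you split each $D_i$ into its part not above $x$ and its part above $x$, the only legal merges available just glue the two pieces back into $D_i$. So claim (i) is false. Moving $a$'s segment under $b$ is not a legal merge, so it contradicts nothing.

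Concretely, let $S=\{x,p,q,t,r\}$ with $p<q$, $p<t<r$, $x<q$, $x<r$ and no other relations. Then:
\begin{itemize}
\item $x$ is minimal and $w'=w=2$;
\item $\mathcal{D}=\{\{p,q\},\{t,r\}\}$ is a valid choice, and neither chain lies above $x$;
\item the minima $p<t$ are comparable and of distinct colors, so no legal merge applies;
\item $S$ has no antichain of size $3$.
\end{itemize}
What is needed is a genuinely new decomposition, here $\{x,q\},\{p,t,r\}$, and nothing in your re-routing constructs one.

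The missing idea is to apply the induction hypothesis a second time, to the proper subposet of elements incomparable to the removed element. The paper proceeds as follows:
\begin{itemize}
\item It removes a maximal $e$ and splits each $C_i$ into the bottom segment $L_i$ of elements $\preceq e$ and the top segment $U_i$.
\item It decomposes $U=\bigcup_i U_i$ by induction into $r'$ chains, where $r'$ is the width of $U$, so $r'\le k$.
\item Only then does it apply legal merges, which never increase the number of chains, to make the chain minima have distinct colors. Each resulting chain can then be stacked on the $L_i$ matching the color of its minimum.
\item If $r''<k$, some $L_i$ is free and receives $e$.
\item If $r''=k$, then $r'=k$, and a maximum antichain of $U$ together with $e$ is an antichain of size $k+1$. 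This antichain comes from the induction hypothesis on $U$, not from chain minima.
\end{itemize}

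If you prefer to remove a minimal $x$, everything must be dualized. The segments comparable to $x$ are then the tops of the chains, so you need the decomposition of $U$ to have maxima of distinct colors. That is the dual of the Lemma, not the Lemma as stated. In the example, $U=\{p,t\}$ is a single chain whose top $t$ has color $2$, so it is placed under $\{r\}$, and $x$ goes under the free segment $\{q\}$.
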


\begin{proof}
Clearly, the minimum number of chains is at least the maximum size of an antichain. It remains to show that equality holds, which we prove by induction on $|S|$.

The base case $|S|=1$ is trivial. Let $e$ be a maximal element of $S$, and let $S'=S\setminus\{e\}$. Let $k$ be the size of the maximum antichain in $S'$. By the inductive hypothesis, $S'$ admits a $k$-chain decomposition $\{C_1, \dots, C_k\}$. Assign to each chain $C_i$ a distinct color $i$, and color all elements within $C_i$ with this color. Define 
\[ L_i = \{c \in C_i \mid c \preceq e\}, \quad U_i = \{c \in C_i \mid c \npreceq e\}. \]

Let $U = \bigcup_{i=1}^k U_i$. Since $e$ is a maximal element of $S$, $e$ is incomparable to all elements in $U$. By induction, $U$ decomposes into $r'$ chains $\mathcal{U}'$, where $r'$ is the size of the largest antichain in $U$. By the Lemma, $\mathcal{U}'$ can be adjusted to a decomposition $\mathcal{U}''$ with $r''$ chains of distinct minimal element colors, where $r'' \le r' \le k$.

Each chain in $\mathcal{U}''$ pairs with exactly one $L_i$ by the color of its minimal element, with the paired two concatenated end-to-end to form a single chain. If $r'' < k$, we append $e$ to any unpaired $L_i$ to form a chain, yielding a partition of $S$ into $k$ chains. If $r'' = k$, let the element $e$ form a separate chain; however, since $e$ is incomparable to all elements in $U$, any maximum antichain of $U$ together with $e$ forms an antichain of size $k+1$. The equality remains satisfied.
\end{proof}
\medskip
Combining the merging lemma with Dilworth’s theorem, we immediately obtain the following structural corollary.
\begin{corollary}
Every chain decomposition can be transformed into a minimal decomposition after finitely many legal merges, where the final number of chains equals the width of the poset.
\end{corollary}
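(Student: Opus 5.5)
The plan is to read the Corollary with the coloring scheme induced by a \emph{minimum} chain decomposition. Fix a finite poset $S$ of width $w$. By Dilworth's theorem, $S$ has a chain decomposition $\mathcal{C}=\{C_1,\dots,C_w\}$ with exactly $w$ chains, and I would use it as the reference decomposition defining the colors. So exactly $w$ colors appear, and two elements share a color iff they lie in the same $C_i$. Now let $\mathcal{C}'$ be an arbitrary chain decomposition of $S$. The claim to prove is that some finite sequence of legal merges (with respect to this coloring) turns $\mathcal{C}'$ into a decomposition with exactly $w$ chains.

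First I would apply the Lemma to $\mathcal{C}'$. It gives a finite sequence of legal merges ending in a decomposition $\mathcal{C}''$ in which the minimal elements of the chains have pairwise distinct colors. Before using this I would check two points the Lemma uses implicitly.
\begin{enumerate}
\item Each legal merge again produces a chain decomposition. The merged segment is a chain because all elements of one color are pairwise comparable. The correct placement is guaranteed by moving the segment with the smaller top into the chain whose segment has the larger top. The donor chain loses only its bottom segment, so what remains is still a chain, possibly empty; an empty chain is discarded.
\item A legal merge never increases the number of chains: it either leaves the count unchanged or lowers it by one.
\end{enumerate}
Both follow directly from the definition, so every intermediate stage, and in particular $\mathcal{C}''$, is a genuine chain decomposition of $S$.

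Next comes the counting step. Every chain of $\mathcal{C}''$ has a minimal element, these minimal elements have pairwise distinct colors, and only $w$ colors exist. Hence $\mathcal{C}''$ has at most $w$ chains. On the other hand, an antichain of size $w$ meets each chain in at most one element, so any chain decomposition has at least $w$ chains (the easy direction of Dilworth's theorem). Therefore $\mathcal{C}''$ has exactly $w$ chains, so it is a minimal decomposition and its number of chains equals the width.

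The main obstacle is the precise reading of the statement rather than the argument. If the coloring came from a non-minimal initial decomposition with $k>w$ chains, the same reasoning would only give at most $k$ chains, and the conclusion could fail. So I would state explicitly that the reference decomposition $\mathcal{C}$ is chosen minimal, which Dilworth's theorem makes possible. The only other care needed is the termination claim in the Lemma. I would make it rigorous by noting that each legal merge strictly decreases the total number of maximal monochromatic segments, a positive integer, so only finitely many merges can occur.
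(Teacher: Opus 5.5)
Your proposal is correct and follows essentially the same argument as the paper's proof. You color by a $w$-chain decomposition supplied by Dilworth's theorem, apply the Lemma to reach pairwise distinct minimal-element colors, bound the chain count above by $w$ through the colors, and bound it below by $w$ through the easy direction; your added checks (each merge yields a chain decomposition, and the monochromatic-segment count strictly drops) are sound details the paper leaves to the definition of a legal merge and to the Lemma.
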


\begin{proof}
Let $w$ be the width of the poset $S$. By Dilworth's theorem, there exists a minimal chain decomposition $\mathcal{C} = \{C_1, \dots, C_w\}$ of $S$. We apply the coloring scheme defined above to $S$ based on $\mathcal{C}$, using exactly $w$ distinct colors.

Now, let $\mathcal{C}'$ be an arbitrary chain decomposition of $S$. According to the Lemma, a finite sequence of legal merges transforms $\mathcal{C}'$ into a decomposition $\mathcal{C}''$ where the minimal elements of all chains have pairwise distinct colors. Since there are only $w$ colors available in total, the number of chains in $\mathcal{C}''$ cannot exceed $w$. On the other hand, by Dilworth's theorem, no chain decomposition of $S$ can have fewer than $w$ chains. It follows that the final number of chains in $\mathcal{C}''$ is precisely $w$.
\end{proof}

\paragraph*{Acknowledgements}
I am grateful to Professor Douglas West for reading the manuscript and providing detailed comments.

\end{document}